\theoremstyle{definition}
\newtheorem{proposition}{\quad Proposition}  
\newtheorem{lemma}[proposition]{\quad Lemma}
\newtheorem{corollary}[proposition]{\quad Corollary}
\newtheorem{theorem}[proposition]{\quad Theorem}
\newtheorem{assumption}{\quad Assumption}
\theoremstyle{remark}
\begin{document}

\title{Optimal Energy-Efficient Regular Delivery of Packets in Cyber-Physical Systems}

\author{\IEEEauthorblockN{Xueying Guo\IEEEauthorrefmark{1}, Rahul Singh\IEEEauthorrefmark{2}, P. R. Kumar\IEEEauthorrefmark{2} and Zhisheng Niu\IEEEauthorrefmark{1}}
	\IEEEauthorblockA{\IEEEauthorrefmark{1}Department of Electronic Engineering, Tsinghua Uviersity, R. R. China\\
		Email: guo-xy11@mails.tsinghua.edu.cn, niuzhs@tsinghua.edu.cn}
	\IEEEauthorblockA{\IEEEauthorrefmark{2}Department of Electrical and Computer Engineering, Texas A\&M University, USA\\
		Email:\{rsing1, prk\}@tamu.edu	}
}

\maketitle

\begin{abstract}
In cyber-physical systems such as in-vehicle wireless sensor networks, a large number of sensor nodes continually generate measurements that should be
received by other nodes such as actuators in a regular fashion. Meanwhile, energy-efficiency is also important in wireless sensor networks. Motivated by these, we develop scheduling policies which are energy efficient and simultaneously maintain ``regular" deliveries of packets. A tradeoff parameter is introduced to balance these two conflicting objectives. 
We employ a Markov Decision Process (MDP) model where the state of each client is the time-since-last-delivery of its packet, and reduce it into an equivalent finite-state MDP problem. Although this equivalent problem can be solved by standard dynamic programming techniques, it suffers from a high-computational complexity.
Thus we further pose the problem as a restless multi-armed bandit problem and employ the low-complexity Whittle Index policy. 
It is shown that this problem is indexable and the Whittle indexes are derived. Also, we prove the Whittle Index policy is asymptotically optimal and validate its optimality via extensive simulations. 
\end{abstract}

\IEEEpeerreviewmaketitle

\section{Introduction}
Cyber-physical systems typically employ wireless sensors 
for keeping track of physical processes such as temperature and pressure.
These nodes then transmit data packets containing these measurements back to the access point/base station. Moreover, these packets 
should be delivered in a ``regular" way. 
So, time between successive deliveries of packets, i.e. inter-delivery time, is an important performance metric \cite{atilla1,atilla2}. 
Furthermore, many wireless sensors are battery powered. Thus, energy-efficiency is also important. 

We address the problem of satisfying these dual conflicting objectives: inter-delivery time requirement and energy-efficiency. We design wireless scheduling policies that support the inter-delivery requirements of such wireless clients in an energy-efficient way. 
In \cite{Singh2013,Singh2014}, the authors analyzed the growth-rate of service irregularities that occur for the case of multiple clients sharing a wireless network and when the system is in heavy traffic regime. The inter-delivery performance of the Max Weight discipline under the heavy traffic regime was studied in \cite{Singh2015Sig}.
To the authors' best knowledge, the inter-delivery time was first considered in 
\cite{atilla1,atilla2} as a performance metric for queueing systems, where a sub-optimal policy is proposed to trade off the stablization of the queues and service regularity. 
However, this is different from our problem, where the arrival process does not need to be featured. 
In our previous work \cite{Singh2015Infocom}, throughput is traded off for better performance with respect to variations in inter-delivery times. However, tunable and heterogeneous inter-delivery requirements have not been considered.


In this paper, we formulate the problem as a Markov Decision Process (MDP) with a system cost consisting of the summation of the penalty for exceeding the inter-delivery threshold and a weighted transmission energy consumption. An energy-efficiency weight parameter $\eta$ is introduced to balance these two aspects. To solve this infinite-state MDP problem, we reduce it to an equivalent MDP comprising of only a finite number of states. This equivalent finite-state finite-action MDP can be solved using standard dynamic programming (DP) techniques. 

The significant challenge of this MDP approach is the computational complexity, since the state-space of the equivalent MDP increases exponentially in the number of clients. To address this, we further formulate this equivalent MDP as a restless multi-armed bandit problem (RMBP), with the goal of exploiting a low-complexity index policy. 


In this RMBP, we first derive an upper bound on the achievable system reward by exploring the structure of a relaxed-constraint problem. Then, we determine the Whittle index for our multi-armed restless bandit problem, and prove that the problem is indexable. In addition, we show the resulting index policy is optimal in certain cases, and validate the optimality by a detailed simulation study. 
The impact of the energy-efficiency parameter $\eta$ is also studied in the simulation results.

\section{System Model}
Consider a cyber-physical system in which there are $N$ wireless sensors and one access point (AP). 
We will assume that time is discrete. 
At most $L$ sensors can simultaneously transmit in a time slot. 
In each time-slot, 
a control message is broadcasted at the beginning by the AP to inform which set of $L$ sensors can transmit in the current time-slot. Each of the assigned sensors then makes a sensor measurement and transmits its packet. 
The length of a time slot is the time required for the AP to send the control message plus the time required for the $L$ assigned clients to prepare and transmit a package. 

The wireless channel connecting the sensor and the AP is unreliable. When client $n$ is selected to transmit, it succeeds in delivering a packet with a probability $p_n\in (0,1)$. Furthermore, each attempt to transmit a packet of client $n$ consumes $E_n$ units of energy.

The QoS requirement of client $n$ is specified through an integer, the \textit{packet inter-delivery time threshold} $\tau_n$.  
The cost incurred by the system during the time interval $\{0,1,\ldots,T\}$ is given by, 
\begin{align}\label{eq:cost define in System Model}
\nonumber 
\mathrm{E} \Big[   \sum_{n=1}^N   \Big(\sum_{i=1}^{M_T^{(n)}} ( D_i^{(n\!)} - \tau_n)^+   
&+ (T - t_{ D_{ M_T^{(n)}}^{ (n )}} -\tau_n)^+ \\ 
&+\eta   \hat{M}_T^{(n)} E_n \Big)
  \Big] ,
\end{align}
where $D_i^{(n)}$ is the time between the deliveries of the $i$-th and $(i+1)$-th packets for client $n$, $M_T^{(n)}$ is the number of packets delivered for the $n$-th client by the time $T$, $t_{D_{i}^{(n)}}$ is the time slot in which the $i$-th package for client $n$ is delivered, 
$\hat{M}_T^{(n)}$ is the total number of slots in $\{0,1,\cdots,T\!-\!1\}$ in which the $n$-th client is selected to transmit, 
and $(a)^+:=\max\{a,0\}$. 
The second term is included since, otherwise, no transmission at all will result in the least cost. 
The last term weights the total energy consumption in $T$ time-steps by a non-negative \textit{energy-efficiency parameter} $\eta$, which tunes the weightage given to energy conservation.
The access point's goal is to select at most $L$ clients to transmit in each time-slot from among the $N$ clients, so as to minimize the above cost.
 
\section{Reduction to Finite State Problem}
In the following, vectors will be denoted bold font, i.e., $\mathbf{a}: =\left(a_1,\ldots,a_N\right)$. Define
$\mathbf{a}\wedge \mathbf{b}: = \left(a_1 \wedge b_1,\ldots, a_N\wedge b_N\right)$. 
Random processes will be denoted by capitals.

We formulate our system as a Markov Decision Process, as follows. 
The system state at time-slot $t$ is denoted by a vector $X(t):=\left(X_1\left(t\right),\cdots,X_N\left(t\right) \right)$, where $X_n(t)$ is the time elapsed since the latest delivery of client $n$'s packet.
Denote the action at time $t$ as $U(t):=\left( U_1\left(t\right),\cdots, U_N\left(t\right) \right)$,  with $\sum_{n=1}^N U_n(t) \leq L$ for each $t$, where
 \begin{align*}
 U_n(t)=
 \begin{cases}
 1 \mbox{ if client } n \mbox{ is selected to transmit in slot } t,\\
 0 \mbox{ otherwise}.
 \end{cases}
 \end{align*}
 The system state evolves as, 
\begin{align*}
X_n(t+1) =
\begin{cases}
0 \mbox{~~if a packet of client } n \mbox{ is delivered in } t,\\
X_n(t) + 1  \mbox{ otherwise. }
\end{cases}
\end{align*} 
Thus, the system forms a controlled Markov chain (denoted \textit{MDP-1}), with the transition probabilities given by, 
\begin{align*}
P_{\mathbf{x},\mathbf{y}}^\text{MDP-1}(\mathbf{u})
:= \mathrm{P}\left[X(t+1)=\mathbf{y}\big|X(t)=\mathbf{x},U(t)=\mathbf{u}\right] \\
= \prod_{n=1}^{N} 
	\mathrm{P}\left[X_n(t+1)=y_n \big|X_n(t)=x_n,U_n(t)=u_n\right], 
\end{align*}
with ~~~~ $\mathrm{P}\left[X_n(t+1)=y_n \big|X_n(t)=x_n,U_n(t)=u_n\right]$
\begin{align*}
:=
\begin{cases}
p_n & \mbox{if }y_n=0 \mbox{ and }u_n=1, \\
1-p_n & \mbox{if }y_n=x_n+1 \mbox{ and }u_n=1, \\
1 & \mbox{if }y_n=x_n+1 \mbox{ and }u_n=0, \\
0 & \mbox{otherwise.}
\end{cases}
\end{align*}

The $T$-horizon optimal cost-to-go from initial state $\mathbf{x}$ is given by, 
\begin{align*} 
V_T(\mathbf{x})&:=\min_{\pi: \sum_n\!\! U_n(t)\leq L} \mathrm{E}\bigg\{
\sum_{t=0}^{T-1} \sum_{n=1}^{N}  \Big(
\eta E_n U_n(t) \\ 
&+
\left(X_n(t)\! +\! 1\!-\!\tau_n \right)^{+} \bm{1}\!\left\{X_n\!\left(t\!+\!1\right) = 0\right\}
\Big) 
\bigg|X(0)=\mathbf{x} \bigg\}, 
\end{align*}
where 
$\bm{1}\{\cdot\}$ is the indicator function, and $X(T):=\mathbf{0}$ (which leads to recovering the second term in the cost \eqref{eq:cost define in System Model}), and the minimization is over the class of history dependent policies.

The Dynamic Programming (DP) (see \cite{Puterman1994}) recursion is,
\begin{align}\label{eq:recursive relationship original MDP-1}
\nonumber 
V_T(\mathbf{x})&=\min_{\mathbf{u}:\sum_n \!\! u_n \leq L} \mathrm{E}\bigg\{
\eta \sum_{n=1}^N E_n u_n + \sum_{\mathbf{y}} P_{\mathbf{x},\mathbf{y}}^\text{MDP-1}(\mathbf{u}) \\ &\cdot \left[\sum_{n=1}^N \left(x_n+1-\tau_n \right)^+ \bm{1}\!\left\{y_n=0 \right\} +V_{T-1}(\mathbf{y})\right]
 \bigg\}.
\end{align} 

The above problem, denoted as MDP-1, involves a countably infinite state space. The following results show that it can be replaced by an equivalent finite state MDP.

\begin{lemma}\label{th:fundamental claim}
For the MDP-1, we have, $~\forall x_1,\cdots,x_N \geq 0$, 
\begin{align*} 
V_T(x_1,\cdots,\tau_i\!+\!x_i,\cdots,x_N\!)=x_i\!+\!V_T(x_1,\cdots,\tau_i,\cdots,x_N\!).
\end{align*}
Moreover, the optimal actions for the states $(x_1,\cdots,\tau_i+x_i,\cdots,x_N)$ and $(x_1,\cdots,\tau_i,\cdots,x_N)$ are the same. 
\end{lemma}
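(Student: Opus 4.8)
The plan is to prove both assertions simultaneously by induction on the horizon $T$, exploiting a single structural fact: once a client's coordinate reaches or exceeds its threshold, the delivery penalty $(\,\cdot\,+1-\tau_i)^{+}$ is \emph{exactly linear with unit slope}. Fix the index $i$ and abbreviate $\mathbf{x}^{a}:=(x_1,\ldots,\tau_i+x_i,\ldots,x_N)$ and $\mathbf{x}^{b}:=(x_1,\ldots,\tau_i,\ldots,x_N)$; the target is $V_T(\mathbf{x}^{a})=x_i+V_T(\mathbf{x}^{b})$ together with equality of the minimizing actions. Throughout, note that the feasible action set $\{\mathbf{u}:\sum_n u_n\le L\}$ is the same for both states and that everything attached to the clients $n\ne i$ (current coordinate, action, transition law, and penalty) is identical for $\mathbf{x}^{a}$ and $\mathbf{x}^{b}$, so only client $i$ can create a discrepancy.

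For the base case I would verify the identity directly from the terminal cost implied by the convention $X(T):=\mathbf{0}$: the boundary contribution is linear above threshold in each coordinate, so replacing $\tau_i$ by $\tau_i+x_i$ adds exactly $x_i$ and leaves the other coordinates untouched, reducing the base case to a one-line computation. The only care needed here is to pin down the paper's terminal bookkeeping; the substance is merely the unit-slope linearity just noted.

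For the inductive step, assume the identity and the matching of optimizers hold for $V_{T-1}$ at \emph{every} state, and let $Q_T(\mathbf{x},\mathbf{u})$ denote the expression inside the $\min$ of the recursion \eqref{eq:recursive relationship original MDP-1}. The crux is to show, for every feasible $\mathbf{u}$,
\[
Q_T(\mathbf{x}^{a},\mathbf{u})=x_i+Q_T(\mathbf{x}^{b},\mathbf{u}),
\]
which I would argue action-by-action by splitting on $u_i$. If $u_i=0$, client $i$ is not delivered, so its immediate penalty vanishes in both states and its next coordinate is $\tau_i+x_i+1$ versus $\tau_i+1$; applying the induction hypothesis (with shift amounts $x_i+1$ and $1$) for each realization of the other clients' next coordinates makes the continuation values differ by exactly $(x_i+1)-1=x_i$, hence also in expectation. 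If $u_i=1$, then with probability $p_i$ delivery occurs, in which case client $i$'s immediate penalty is $(\tau_i+x_i+1-\tau_i)^{+}=x_i+1$ for $\mathbf{x}^{a}$ versus $1$ for $\mathbf{x}^{b}$ — a gap of $x_i$ — while the successor coordinate is $0$ in both and the continuation values coincide; with probability $1-p_i$ there is no delivery, the immediate penalties are equal (both zero) and the continuation values again differ by $x_i$ via the induction hypothesis. The expected discrepancy is therefore $p_i\,x_i+(1-p_i)\,x_i=x_i$, and since the energy term $\eta\sum_n E_n u_n$ is state-independent, the displayed relation for $Q_T$ follows in both cases.

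Finally, because $Q_T(\mathbf{x}^{a},\cdot)$ and $Q_T(\mathbf{x}^{b},\cdot)$ differ by the constant $x_i$ independent of $\mathbf{u}$, minimizing over $\mathbf{u}$ yields $V_T(\mathbf{x}^{a})=x_i+V_T(\mathbf{x}^{b})$ and shows the two states have identical sets of minimizers, which is exactly the claim about optimal actions. I expect the only genuinely delicate point to be the bookkeeping in the $u_i=1$ branch, where the increment $x_i$ is split between the immediate penalty (on the delivery event) and the continuation value (on the non-delivery event) and must be checked to recombine to precisely $x_i$; everything else is routine once the unit-slope linearity of the penalty above threshold is isolated.
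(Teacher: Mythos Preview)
Your induction on the horizon through the Bellman recursion is correct; the base case resolves just as you anticipate (the convention $X(T):=\mathbf{0}$ forces $V_1(\mathbf{x})=\sum_n(x_n+1-\tau_n)^+$, whose unit-slope linearity above threshold gives the $T=1$ identity, and for $T\ge 2$ the standard recursion applies). The paper, however, argues differently: it couples two copies of the system started from $\mathbf{x}^a$ and $\mathbf{x}^b$ on a common probability space, lets an arbitrary policy $\pi$ drive the first system and the mimicking policy $\tilde{\pi}$ (same action at every slot) drive the second, and observes that every inter-delivery time then coincides except client $i$'s \emph{first} one, which is larger by exactly $x_i$ in the $\mathbf{x}^a$ system; since that first inter-delivery time already exceeds $\tau_i$ in both systems, the $(\cdot-\tau_i)^+$ penalties differ by exactly $x_i$, yielding one inequality, and symmetry yields the other. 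Your route is more mechanical but has the advantage that the $Q$-value shift $Q_T(\mathbf{x}^a,\mathbf{u})=x_i+Q_T(\mathbf{x}^b,\mathbf{u})$ immediately gives equality of the \emph{sets} of minimizers, whereas the paper's coupling is shorter and works directly with the cost \eqref{eq:cost define in System Model} without ever touching the recursion or the terminal bookkeeping.
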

\begin{proof}
Let us consider the MDP-1 starting from two different initial states, $\mathbf{x}=(x_1,\cdots,\tau_i+x_i,\cdots,x_N)$ and  $\tilde{\mathbf{x}}=(x_1,\cdots,\tau_i,\cdots,x_N)$, and compare their evolutions.
Construct the processes associated with both the systems on a common probability space and couple stochastically the successful transmissions for the two systems. 
Let $\pi$ be an arbitrary history-dependent policy that is applied to in the first system (starting in state $\mathbf{x}$).
Corresponding to $\pi$, there is a policy $\tilde{\pi}$ in the second system, which takes the same actions as the policy $\pi$ at each time slot. Then all the packet-inter-delivery times for both the processes are the same, except for the first inter-delivery time of the $i$-th client, which is larger for the former system as compared to the latter by $x_i$. 
In addition, 
Since the policy $\pi$ is arbitrary, 
$V_T(\mathbf{x})\geq  x_i+ V_T(\tilde{\mathbf{x}})$. 
The inequality in the other direction is proved similarly. 
The proof of the second statement follows by letting $\pi$ be the optimal policy. 
\end{proof}

\begin{corollary}\label{th:equivalent recur-rela corollary}
For any system state $\mathbf{x}$ such that $x_n\leq \tau_n,\forall n$,
\begin{align}\label{eq:recursive relationship new MDP-1}
\nonumber
V_T(\mathbf{x})=\min_{\mathbf{u}:\sum_n u_n\leq L}& \mathrm{E}\bigg\{\sum_n  \left(\eta E_n u_n +\bm{1}\left\{x_n=\tau_n\right\} \right)\\
&+ \sum_{\mathbf{y}}P_{\mathbf{x},\mathbf{y}}^\text{MDP-1} V_{T-1}(\mathbf{y}\wedge\boldsymbol{\tau})
 \bigg\}.
\end{align}
\end{corollary}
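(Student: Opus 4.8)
The plan is to reduce the original recursion \eqref{eq:recursive relationship original MDP-1} to the claimed form by first rewriting the continuation value $V_{T-1}(\mathbf{y})$ via Lemma~\ref{th:fundamental claim}, and then collecting the resulting per-stage terms. First I would apply Lemma~\ref{th:fundamental claim} one coordinate at a time: whenever $y_n > \tau_n$, write $y_n = \tau_n + (y_n - \tau_n)$ and peel off the excess $(y_n-\tau_n)$, while coordinates with $y_n \le \tau_n$ are left untouched (there the identity holds trivially, since $(y_n-\tau_n)^+ = 0$ and $y_n \wedge \tau_n = y_n$). Iterating over all $N$ coordinates yields the decomposition
\begin{align*}
V_{T-1}(\mathbf{y}) = \sum_{n=1}^N (y_n - \tau_n)^+ + V_{T-1}(\mathbf{y}\wedge\boldsymbol{\tau}).
\end{align*}

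Next I would substitute this decomposition into \eqref{eq:recursive relationship original MDP-1} and evaluate the two penalty-type expectations against the transition kernel. Since $\sum_{\mathbf{y}} P_{\mathbf{x},\mathbf{y}}^\text{MDP-1}(\mathbf{u})\,\bm{1}\{y_n=0\} = \mathrm{P}[Y_n = 0 \mid \mathbf{x},\mathbf{u}] = p_n u_n$, the original delivery penalty contributes $\sum_n (x_n+1-\tau_n)^+ p_n u_n$. For the overflow term, the kernel gives $Y_n \in \{0,\, x_n+1\}$, so that $\mathrm{E}[(Y_n - \tau_n)^+ \mid \mathbf{x},\mathbf{u}] = (1-p_n u_n)(x_n+1-\tau_n)^+$, contributing $\sum_n (1-p_n u_n)(x_n+1-\tau_n)^+$.

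The key step is then to add these two contributions: the factors $p_n u_n$ and $1-p_n u_n$ sum to one, so the two $\mathbf{u}$-dependent penalties collapse into the single $\mathbf{u}$-independent quantity $\sum_n (x_n+1-\tau_n)^+$. This is the crux of the argument: the stochastic delivery penalty and the overflow absorbed from the value function conspire to produce a deterministic per-stage cost. Finally, invoking the hypothesis $x_n \le \tau_n$ for all $n$, I would note that $(x_n+1-\tau_n)^+$ equals $1$ exactly when $x_n = \tau_n$ and $0$ otherwise, i.e. $(x_n+1-\tau_n)^+ = \bm{1}\{x_n=\tau_n\}$, which turns the collapsed cost into $\sum_n \bm{1}\{x_n=\tau_n\}$ and produces \eqref{eq:recursive relationship new MDP-1}.

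I expect the main obstacle to be recognizing the cancellation in the third step rather than any single calculation: one must first decompose the continuation value via Lemma~\ref{th:fundamental claim}, and only after folding the overflow back into the stage cost does it become apparent that the action-dependence of the penalty vanishes. A secondary point worth verifying is that the recursion genuinely closes on the finite state space: from any $\mathbf{x}$ with $x_n \le \tau_n$ the successor satisfies $y_n \le x_n+1 \le \tau_n+1$, so $\mathbf{y}\wedge\boldsymbol{\tau}$ again lies in $\{\mathbf{x} : x_n \le \tau_n\ \forall n\}$, confirming that \eqref{eq:recursive relationship new MDP-1} is self-contained.
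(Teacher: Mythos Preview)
Your proposal is correct and follows essentially the same approach as the paper: both use Lemma~\ref{th:fundamental claim} to replace $V_{T-1}(\mathbf{y})$ by $V_{T-1}(\mathbf{y}\wedge\boldsymbol{\tau})$ plus the excess $(y_n-\tau_n)^+$, and then combine this excess with the original delivery penalty $(x_n+1-\tau_n)^+\bm{1}\{y_n=0\}$ to produce the indicator $\bm{1}\{x_n=\tau_n\}$. The only difference is presentational---the paper proceeds by a case analysis on whether $x_n<\tau_n$ or $x_n=\tau_n$, whereas you carry out a single uniform computation and make the cancellation $p_n u_n + (1-p_n u_n)=1$ explicit.
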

\begin{proof}
Consider the equation \eqref{eq:recursive relationship original MDP-1} and the following two cases:
\begin{enumerate}
\item The initial state $\mathbf{x}$ is such that $x_n < \tau_n, \forall n$. Then $(x_n+1-\tau_n)^+=0$ and $\bm{1}\{x_n=\tau_n \}=0$. 
In addition, for any action $\mathbf{u}$, if $\mathbf{y}$ is any state such that $P^{\text{MDP-1}}_{\mathbf{x},\mathbf{y}}(\mathbf{u})>0 $, then $\mathbf{y}$ satisfies $y_n\leq \tau_n,\forall n$, which shows,  $\mathbf{y}=\mathbf{y}\wedge\boldsymbol{\tau}$.

\item There exists an $i$ such that the initial state $\mathbf{x}$ satisfies $x_i=\tau_i$. Let us first assume there is only one client $i$ satisfying $x_i=\tau_i$ and that $x_j< \tau_j,\forall j\neq i$. 
Then, for any action $\mathbf{u}$, if $\mathbf{y}$ is any state such that $P^{\text{MDP-1}}_{\mathbf{x},\mathbf{y}}(\mathbf{u})>0 $, we have $y_j\leq \tau_j,\forall j\neq i$, and also $y_i$ is either $0$ or $\tau_i+1$. If $y_i=0$ and $y_j\leq \tau_j, \forall j\neq i$, then $(x_i+1-\tau_i)^+\bm{1}(y_i=0)=1$ and $\mathbf{y} =\mathbf{y}\wedge \boldsymbol{\tau}$. If $y_i=\tau_i+1$, and $y_j\leq \tau_j, \forall j\neq i$, then from Lemma \ref{th:fundamental claim}, $V_{T-1}(\mathbf{y})=1+V_{T-1}(\mathbf{y}\wedge \boldsymbol{\tau})$. Thus, when there is only one client $i$ satisfying $x_i=\tau_i$, the r.h.s (right-hand side) of  \eqref{eq:recursive relationship original MDP-1} can be rewritten as, 
\begin{align*}
\min_{\mathbf{u}:\sum_n\!\! u_n\leq L} \!\!\!\! \mathrm{E}\left\{\eta\!\sum_n\!\! E_n u_n   \!+\!1\!+\! \sum_{\mathbf{y}}P_{\mathbf{x},\mathbf{y}}^\text{MDP-1} V_{T-1}(\mathbf{y}\wedge\boldsymbol{\tau})
\right\}. 
\end{align*}
The case where there are one or more clients $j\neq i$ satisfying $x_j=\tau_j$ is proved similarly. 
\end{enumerate}
\end{proof}

The following lemma can be easily derived, the proof of which is omitted due to space constraints.
\begin{lemma} \label{th:Y(t) is MDP Lemma}
$Y(t):=X(t)\wedge \boldsymbol{\tau}$ is a Markov Decision Process 
\begin{align*}
\text{with~~~}&\mathrm{P}\left[Y(t+1)|Y(t),\cdots,Y(0),U(t),\cdots,U(0) \right]\\
=&\mathrm{P}\left[Y(t+1)|Y(t),U(t)\right]. 
\end{align*}
\end{lemma}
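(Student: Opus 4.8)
The plan is to exhibit $Y(t)=X(t)\wedge\boldsymbol{\tau}$ as a \emph{lumped} version of the controlled Markov chain $X(t)$ of MDP-1 and to verify that the lumping is compatible with the dynamics, so that the projected process again enjoys the Markov property. The map $\mathbf{x}\mapsto\mathbf{x}\wedge\boldsymbol{\tau}$ acts componentwise and, for each client $n$, merges all states with $x_n\ge\tau_n$ into the single value $\tau_n$ while leaving the states $x_n<\tau_n$ distinct. The crux is therefore to check that once a component sits at or above its threshold, the precise value of $X_n(t)$ is irrelevant to the future law of $Y_n(t)$; everything else reduces to the product structure of the transitions of MDP-1.

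Concretely, I would first compute $Y_n(t+1)=X_n(t+1)\wedge\tau_n$ directly from the dynamics of $X_n$. A successful transmission ($u_n=1$, probability $p_n$) gives $X_n(t+1)=0$, hence $Y_n(t+1)=0$; in every other case $X_n(t+1)=X_n(t)+1$, hence $Y_n(t+1)=(X_n(t)+1)\wedge\tau_n$. The key elementary identity is
\begin{align*}
(X_n(t)+1)\wedge\tau_n=\big((X_n(t)\wedge\tau_n)+1\big)\wedge\tau_n=(Y_n(t)+1)\wedge\tau_n ,
\end{align*}
since incrementing and re-truncating any value $\ge\tau_n$ always returns $\tau_n$, while for $x_n<\tau_n$ the truncation is inactive. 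Thus $Y_n(t+1)$ is a deterministic function of $Y_n(t)$, of $u_n$, and of the single $\mathrm{Bernoulli}(p_n)$ transmission outcome, and in particular does not depend on $X_n(t)$ beyond its truncation $Y_n(t)$.

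Because the transmission outcomes are drawn afresh in each slot and are independent across clients and of the entire past (this is exactly what is encoded in the product form of $P^{\text{MDP-1}}_{\mathbf{x},\mathbf{y}}(\mathbf{u})$), conditioning on the full history $\big(Y(s),U(s)\big)_{s\le t}$ yields the same conditional law of $Y(t+1)$ as conditioning on $\big(Y(t),U(t)\big)$ alone: the $\sigma$-algebra generated by the past enters only through the current truncated state. Assembling the $N$ components then gives a product-form kernel: for $\mathbf{y},\mathbf{y}'\in\prod_n\{0,\dots,\tau_n\}$, the transition probability of $Y$ factors as a product over $n$ of the componentwise probabilities derived above, which establishes $\mathrm{P}[Y(t+1)\mid Y(t),\dots,Y(0),U(t),\dots,U(0)]=\mathrm{P}[Y(t+1)\mid Y(t),U(t)]$ and hence that $Y(t)$ is an MDP.

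The only genuine point to get right is the verification that the lumping respects the reset probabilities: since $p_n$ does not vary with $x_n$, any two states $\tau_n+k$ and $\tau_n+k'$ lying in the same lump send identical mass to $Y_n=0$ and to $Y_n=\tau_n$ under each action, which is precisely the Kemeny--Snell lumpability condition. Once this is confirmed, all remaining steps are routine bookkeeping on the product chain.
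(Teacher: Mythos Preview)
Your argument is correct: the key identity $(X_n(t)+1)\wedge\tau_n=((X_n(t)\wedge\tau_n)+1)\wedge\tau_n$ shows that $Y_n(t+1)$ is a deterministic function of $Y_n(t)$, $U_n(t)$, and a fresh $\mathrm{Bernoulli}(p_n)$ draw independent of the past, which immediately yields the controlled Markov property in the product form you describe. The paper actually omits the proof of this lemma (``the proof of which is omitted due to space constraints''), so there is no alternative argument to compare against; your lumpability verification is precisely the natural route and matches what the later construction of MDP-2 in \eqref{eq:indivitual transition pro. MDP-2} presupposes.
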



Now we construct another MDP, denoted \textit{MDP-2}, which is equivalent to the MDP-1 in an appropriate sense. We will slightly abuse notation and continue to use the symbols $Y(t)$ and $U(t)$ for states and controls.

For  
$Y_n(0) \in \{0,1,\cdots,\tau_n\}$,  let $Y_n(t)$ evolves as, 
\begin{align*}
Y_n(t+1)=
\begin{cases}
0 ~~\mbox{if a packet is delivered for client $n$ at $t$, } \\
\left(Y_n(t)+1\right) \wedge \tau_n ~~\mbox{otherwise.}
\end{cases}
\end{align*}
Denote by $P_{\!\mathbf{x},\mathbf{y}}^\text{MDP-2}$ the transition probabilities of the resulting process $Y(t):=\left(Y_1(t),\cdots,Y_N(t)\right)$ on the state space 
$\mathbb{Y}:=\prod_{n=1}^N\{0,1,\cdots,\tau_n\}$, where the transition probabilities, 
\begin{align}\label{eq:indivitual transition pro. MDP-2}
\nonumber 
\mathrm{P}\left[Y_n(t+1)=y_n \big|Y_n(t)=x_n,U_n(t)=u_n\right]\\
=
\begin{cases}
p_n & \mbox{if }y_n=0 \mbox{ and }u_n=1, \\
1-p_n & \mbox{if }y_n=(x_n+1)\wedge \tau_n  \mbox{ and }u_n=1, \\
1 & \mbox{if }y_n=(x_n+1)\wedge \tau_n  \mbox{ and }u_n=0, \\
0 & \mbox{otherwise.}
\end{cases}
\end{align} 

The optimal cost-to-go function for MDP-2 is, 
\begin{align}\label{eq:optimal value function MDP-2} 
\nonumber
V_T(\mathbf{x}):=&\!\!\!\!\min_{\pi: \sum_n\!\! U_n(t)\leq L} \!
 \mathrm{E}\bigg\{
\sum_{t=0}^{T\!-\!1} \sum_{n=1}^{N}  \bm{1}\!\left\{Y_n\!\left(t\right) = \tau_n \right\}
\\
&+ \eta E_n U_n(t) \bigg|Y(0)=\mathbf{x} \bigg\},
\forall \mathbf{x}\in \mathbb{Y}. 
\end{align}

\begin{theorem}
MDP-2 is equivalent to the MDP-1 in that:
\begin{enumerate}
\item MDP-2 has the same transition probabilities as the accompanying process of MDP-1, i.e., the process $X(t)\wedge \boldsymbol{\tau}$;
\item Both MDPs satisfy the recursive relationship in \eqref{eq:recursive relationship new MDP-1}; thus, their optimal cost-to-go functions are equal for each starting state $\mathbf{x}$ with $x_n\leq \tau_n,\forall n$; 
\item Any optimal control for MDP-1 in state $\mathbf{x}$ is also optimal for MDP-2 in state $\mathbf{x}\wedge \boldsymbol{\tau}$.
\end{enumerate}
\end{theorem}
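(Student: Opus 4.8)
The plan is to dispatch the three claims in the order stated, leaning on the structural results already in hand. Throughout, write $\mathbb{Y}=\prod_n\{0,\ldots,\tau_n\}$ for the MDP-2 state space and note that its defining transitions \eqref{eq:indivitual transition pro. MDP-2} keep every coordinate in $\{0,\ldots,\tau_n\}$, so every reachable state $\mathbf{y}$ satisfies $\mathbf{y}=\mathbf{y}\wedge\boldsymbol{\tau}$. For claim~(1) I would compare, coordinatewise, the one-step law \eqref{eq:indivitual transition pro. MDP-2} of MDP-2 with that of the truncated process $Y(t)=X(t)\wedge\boldsymbol{\tau}$. By Lemma \ref{th:Y(t) is MDP Lemma} the latter is genuinely Markov, so it suffices to compute $\mathrm{P}[X_n(t+1)\wedge\tau_n=y_n\mid X_n(t)\wedge\tau_n=x_n,U_n(t)=u_n]$ for $x_n\le\tau_n$. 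When $x_n<\tau_n$ the map $z\mapsto z\wedge\tau_n$ is the identity on the support $\{0,x_n+1\}$, so the law agrees with \eqref{eq:indivitual transition pro. MDP-2}; when $x_n=\tau_n$ the event $\{X_n(t)\wedge\tau_n=\tau_n\}$ is $\{X_n(t)\ge\tau_n\}$, on which a failed transmission gives $X_n(t+1)\ge\tau_n+1$ and hence $X_n(t+1)\wedge\tau_n=\tau_n=(x_n+1)\wedge\tau_n$, again matching \eqref{eq:indivitual transition pro. MDP-2}. Since both chains factor over $n$, the full kernels coincide.

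For claim~(2) I first observe that MDP-1 obeys \eqref{eq:recursive relationship new MDP-1} on $\{x_n\le\tau_n\ \forall n\}$ by Corollary \ref{th:equivalent recur-rela corollary}. For MDP-2 I would write its dynamic-programming recursion directly from the cost \eqref{eq:optimal value function MDP-2}, whose per-stage cost is $\sum_n(\eta E_n u_n+\bm{1}\{x_n=\tau_n\})$; this yields precisely the right-hand side of \eqref{eq:recursive relationship new MDP-1} once I use claim~(1) to identify the continuation value $\sum_{\mathbf{y}}P^{\text{MDP-2}}_{\mathbf{x},\mathbf{y}}V_{T-1}(\mathbf{y})$ with $\sum_{\mathbf{y}}P^{\text{MDP-1}}_{\mathbf{x},\mathbf{y}}V_{T-1}(\mathbf{y}\wedge\boldsymbol{\tau})$, the two kernels pushing forward to the same law on $\mathbb{Y}$, together with the fact that $\mathbf{y}=\mathbf{y}\wedge\boldsymbol{\tau}$ on $\mathbb{Y}$. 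With both value functions obeying the identical recursion and the common terminal condition $V_0\equiv0$, equality on $\{x_n\le\tau_n\ \forall n\}$ follows by a routine induction on the horizon $T$.

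For claim~(3) I would reduce a general state $\mathbf{x}$ to $\mathbf{x}\wedge\boldsymbol{\tau}$ by applying Lemma \ref{th:fundamental claim} once for each coordinate with $x_i>\tau_i$; the second assertion of that lemma guarantees the optimal action is preserved at each reduction, so any optimal MDP-1 control at $\mathbf{x}$ is optimal at $\mathbf{x}\wedge\boldsymbol{\tau}$. Since, by claim~(2), the minimization defining the MDP-1 value at $\mathbf{x}\wedge\boldsymbol{\tau}$ and the minimization defining the MDP-2 value at $\mathbf{x}\wedge\boldsymbol{\tau}$ share the same objective in $\mathbf{u}$, their sets of minimizers coincide, which is exactly the claim.

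I expect the only delicate point to be the boundary case $x_n=\tau_n$ in claim~(1): one must verify that the many MDP-1 states that collapse onto a single MDP-2 state all induce the same truncated transition, which is precisely what Lemma \ref{th:Y(t) is MDP Lemma} certifies at the level of the kernel and what Lemma \ref{th:fundamental claim} renders harmless at the level of costs and actions. Everything else is verification of the matching recursions and a horizon induction.
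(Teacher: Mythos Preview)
Your proposal is correct and follows essentially the same route as the paper's proof: claim~(1) via Lemma~\ref{th:Y(t) is MDP Lemma} plus a transition check, claim~(2) by writing the MDP-2 recursion and matching it to Corollary~\ref{th:equivalent recur-rela corollary}, and claim~(3) via Lemma~\ref{th:fundamental claim} together with the identification of the two recursions. You supply more detail than the paper does---in particular the boundary case $x_n=\tau_n$ in~(1), the pushforward identification of the continuation terms in~(2), and the explicit horizon induction---but the architecture is the same.
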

\begin{proof}
Statement 1) directly follows Lemma \ref{th:Y(t) is MDP Lemma}. 
The DP recursion for the optimal cost in MDP-2 is
\begin{align}\label{eq:recursive relationship MDP-2}
\nonumber
V_T(\mathbf{x})=\min_{\mathbf{u}:\sum_n u_n\leq L}& \mathrm{E}\bigg\{\sum_n  \left(\eta E_n u_n +\bm{1}\left\{x_n=\tau_n\right\} \right)\\
&+ \sum_{\mathbf{y}}P_{\mathbf{x},\mathbf{y}}^\text{MDP-2} V_{T-1}(\mathbf{y})
\bigg\}.
\end{align}	
Thus, statement 2) is obtained from \eqref{eq:recursive relationship MDP-2} and Corollary \ref{th:equivalent recur-rela corollary}. 
In addition, statement 3) follows Lemma \ref{th:fundamental claim} and statement 1).  
\end{proof}
As a result, we focus on MDP-2 in the sequel. 

\section{Optimal Index Policy for the Relaxed Problem}
\label{se:asymptotically optimal section}
\subsection{Formulation of Restless Multi-armed Bandit Problem}
MDP-2, with a finite state space, can be solved in a finite number of steps by standard DP techniques (see \cite{Puterman1994}). However, even for a finite time-horizon, it suffers from high computational complexity, since the cardinality of the state space increases exponentially in the number $N$ of clients. 

To overcome this, we formulate MDP-2 as an infinite-horizon restless multi-armed bandit problem (\cite{Whittle1988,Weber1990}), and obtain an Index policy which has low complexity. 

We begin with some notations: Denote by $\alpha$ the \textit{maximum fraction} of clients that can simultaneously transmit in a time slot, i.e., $\alpha={L}/{N}$. 
The process $Y_n(t)$ associated with client $n$ is denoted as \textit{project} $n$ in conformity with the bandit nomenclature. If $U_n(t)=1$, the project $n$ is said to be \textit{active} in slot $t$; while if $U_n(t)=0$, it is said to be \textit{passive} in slot $t$. 

The infinite-horizon problem is to solve, with  $Y(0)=\mathbf{x}\in \mathbb{Y}$, 
\begin{align} \label{eq:original-OP object}
\max_{\pi}~ & \liminf_{T\rightarrow +\infty} \frac{1}{T}\mathrm{E}\Big[\!\!\sum_{t=0}^{T-1}
\!\!\sum_{n=1}^N\!\!\! -\bm{1}\{Y_n(t)=\tau_n\} -\!\eta E_n U_n\!(t)
 \Big] \\ \label{eq:original-OP constraint}
\mbox{s.t.} ~~ & \quad \quad \sum_{n=1}^N (1-U_n(t))\geq (1-\alpha)N, ~\forall t.
\end{align}
Note that
the system reward is considered instead of the system cost. 
\subsection{Relaxations}\label{se:relaxations}
We consider an associated relaxation of the problem \eqref{eq:original-OP object}-\eqref{eq:original-OP constraint} which puts a constraint only on the \textit{time average} number of active projects allowed:  
\begin{align} \label{eq:relaxed-OP object}
\max_{\pi}~ & \liminf_{T\rightarrow +\infty} \frac{1}{T}\mathrm{E}\!\!\left[\sum_{t=0}^{T-1}\!\!
\sum_{n=1}^N\!\!\! -\bm{1}\{Y_n\!(t)\!=\!\tau_n\!\} \!-\!\eta E_n U_n(t) 
\right] \\ \label{eq:relaxed-OP constraint}
\mbox{s.t.} ~~ & \liminf_{T\rightarrow+\infty} \frac{1}{T} \mathrm{E}\left[\sum_{t=0}^{T-1}\sum_{n=1}^N \left(1-U_n(t)\right) \right] \geq (1-\alpha)N.
\end{align}
Since constraint \eqref{eq:relaxed-OP constraint} relaxes the stringent requirement in \eqref{eq:original-OP constraint}, it provides an upper bound on the achievable reward in the original problem. 

Let us consider the Lagrangian associated with the problem \eqref{eq:relaxed-OP object}-\eqref{eq:relaxed-OP constraint}, with $Y(0)=\mathbf{x}\in \mathbb{Y}$, 
\begin{align*}
l(\pi,\omega)&:= \liminf_{T\rightarrow +\infty} \frac{1}{T}\mathrm{E}_{\pi}\!\!\left[\sum_{t=0}^{T-1} \!\! 
\sum_{n=1}^N \!\!\! -\bm{1}\{Y_n\!(t)\!=\!\tau_n\} \!-\!\eta E_n U_n(t) 
\right] \\
& +\omega \liminf_{T\rightarrow +\infty} \frac{1}{T} \mathrm{E}_{\pi}\!\!\left[\sum_{t=0}^{T-1}\sum_{n=1}^N \left(1\!-\!U_n(t)\right)\right] 
-\omega (1\!-\!\alpha)N,
\end{align*}
where $\pi$ is any history-dependent scheduling policy, while $\omega\geq 0$ is the Lagrangian multiplier. 
The Lagrangian dual function is $d(\omega):=\max_{\pi} l(\pi,\omega)$: 
\begin{align} \label{eq:dual function upper bound} 
\nonumber
d(\omega)&\leq \max_{\pi} ~
\liminf_{T\rightarrow+\infty} \frac{1}{T} 
\mathrm{E}\bigg[\sum_{t=0}^{T-1}
\sum_{n=1}^N -\bm{1}\{Y_n(t)=\tau_n\} \\ \nonumber
& -\eta E_n U_n(t) +\omega \left(1-U_n(t)\right) \bigg| Y(0)=\mathbf{x}
\bigg]
\!-\!\omega (1\!-\!\alpha)N \\ \nonumber
&\leq \max_{\pi} ~
\limsup_{T\rightarrow+\infty} \frac{1}{T}
\mathrm{E}\bigg[\sum_{t=0}^{T-1}
\sum_{n=1}^N -\bm{1}\{Y_n(t)=\tau_n\} \\ \nonumber
& -\eta E_n U_n(t) +\omega \left(1-U_n(t)\right) \bigg| Y(0)=\mathbf{x}
\bigg]
\!-\!\omega (1\!-\!\alpha)N \\ \nonumber
&\leq \max_{\pi} ~ \sum_{n=0}^N 
\limsup_{T\rightarrow+\infty} \frac{1}{T}
\mathrm{E}\bigg[\sum_{t=0}^{T-1} -\bm{1}\{Y_n(t)=\tau_n\} \\ 
&
-\eta E_n U_n\!(t) \!+\!\omega \left(1\!-\!U_n\!(t)\right) \bigg| Y\!(0)\!=\!\mathbf{x}
\bigg]
\!-\!\omega (1\!-\!\alpha)N,
\end{align}
where the first and the third inequalities hold because of the super/sub-additivities of the limit inf/sub (respectively). 

Now, consider the unconstrained problem in the last two lines of \eqref{eq:dual function upper bound}. It can be viewed as a composition of $N$ independent $\omega$-subsidy problems interpreted as follows:  
For each client $n$, besides the original reward $-\bm{1}\{Y_n(t)=\tau_n\} -\eta E_n U_n(t)$, when $U_n(t)=0$, it receives a subsidy $\omega$ for being passive. 

Thus, the $\omega$\textit{-subsidy problem} associated with client $n$ is defined as,
\begin{align}\label{eq:w-subsidy problem}
\nonumber 
R_n(\omega)&=\max_{\pi_n}~
\limsup_{T\rightarrow+\infty} \frac{1}{T}
\mathrm{E}\bigg[\sum_{t=0}^{T-1} -\bm{1}\{Y_n(t)=\tau_n\} \\ &-\eta E_n U_n(t) +\omega \left(1-U_n(t)\right) \bigg| Y_n(0)=x_n
\bigg],
\end{align}
where $\pi_n$ is a history dependent policy which decides the action $U_n(t)$ for client $n$ in each time-slot. 

In the following, we first solve this $\omega$-subsidy problem, and then explore its properties to show that strong duality holds for the relaxed problem \eqref{eq:relaxed-OP object}-\eqref{eq:relaxed-OP constraint}, and thereby determine the optimal value for the relaxed problem.

For $\theta\in \{0,1,\cdots,\tau_n\}$ and $\rho\in [0,1]$, we define $\sigma_n(\theta,\rho)$ to be a \textit{threshold policy} for project $n$, as follows: The policy $\sigma_n(\theta,\rho)$ keeps the project passive at time $t$ if  $Y_n(t)<\theta$. However when $Y_n(t)>\theta$, the project is activated, i.e., $U_n(t)=1$. If $Y_n(t)=\theta$, then at time $t$, the project stays passive with probability $\rho$, and is activated with probability $1-\rho$. 

For each project $n$, associate a function 
\begin{align}\label{eq:whittle index original in subsidy}
W_n(\theta):=p_n (\theta+1) (1-p_n)^{\tau_n-(\theta+1)}-\eta E_n,   
\end{align}
where $\theta ={0,1,\cdots,\tau_n-1}$. 
(We elaborate on the physical meaning of $W_n(\cdot)$ later in Section \ref{se:Whittle index Section}). 

\begin{lemma}\label{th:w-subsidy optimal policy/value lemma}
Consider the $\omega$-subsidy problem \eqref{eq:w-subsidy problem} for project $n$. Then,
\begin{enumerate}
\item $\sigma_n(0,0)$ is optimal iff the subsidy $\omega\leq W_n(0)$.
\item For $\theta\in \{1,\cdots,\tau_n-1\},~\sigma_n(\theta,0)$ is optimal iff the subsidy $\omega$ satisfies  $W_n(\theta-1)\leq\omega\leq W_n(\theta)$.
\item $\sigma_n(\tau_n,0)$ is optimal iff $\omega=W_n(\tau-1)$. 
\item $\sigma_n(\tau_n,1)$ is optimal iff $\omega\geq W_n(\tau-1)$.
\end{enumerate}
In addition, for $\theta \in \{0,1,\ldots,\tau\}$, the policies $\{\sigma_n(\theta,\rho): \rho \in [0,1] \}$ are optimal when, 
\begin{enumerate}[i)]
\item $0\leq \theta \leq \tau-1$ and $\omega=W_n(\theta)$,
\item $\theta=\tau$ and $\omega=W_n(\tau-1)$.
\end{enumerate}
Furthermore, for any $\theta \in \{0,\cdots,\tau\}$, under the  $\sigma(\theta,0)$ policy, the average reward earned is, 
\begin{align}\label{rs1}
\frac{p_n\theta \omega-\eta E_n-(1-p_n)^{\tau_n-\theta}}{1+\theta p_n}.
\end{align}
Meanwhile, under the $\sigma_n(\tau_n,1)$ policy, the reward is $\omega-1$. 
\end{lemma}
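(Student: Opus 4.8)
The plan is to treat \eqref{eq:w-subsidy problem} as a single-project average-reward MDP on the finite state space $\{0,1,\dots,\tau_n\}$ and reduce the optimality question to a comparison among the policies $\sigma_n(\theta,\rho)$. First I would argue that it suffices to restrict attention to threshold policies. Under any $\sigma_n(\theta,0)$ with $\theta<\tau_n$ the induced chain is unichain (state $0$ is reached after every success and every state is reachable from $0$), so a gain-optimal stationary policy exists, the Ces\`aro limits in \eqref{eq:w-subsidy problem} are well-defined, and the gain is independent of the start state. To show the optimal stationary policy is of threshold type, I would establish that the advantage of transmitting over staying passive, $Q_\omega(y,1)-Q_\omega(y,0)$, is nondecreasing in the state $y$: a larger $y$ is closer to the penalty state $\tau_n$, so the value of resetting to $0$ grows with $y$ while the instantaneous price $\eta E_n+\omega$ of being active is state-independent. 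This monotone-advantage property (proved by a value-iteration induction, or by an interchange argument comparing a non-threshold policy with the threshold policy of equal activation frequency) forces the optimal active set to be an up-set $\{\theta,\theta+1,\dots,\tau_n\}$.

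Second, I would compute the gain of each threshold policy by renewal-reward theory, regenerating at the instants the state hits $0$ (i.e.\ at successes). Under $\sigma_n(\theta,0)$ a cycle consists of $\theta$ passive slots (states $0,\dots,\theta-1$, each earning $\omega$ and no penalty) followed by an active phase that transmits until the first success; the active phase has geometric length with mean $1/p_n$, costs $\eta E_n$ per slot, and incurs the penalty $-1$ only while the state sits at $\tau_n$, which is reached with probability $(1-p_n)^{\tau_n-\theta}$ and then held for a further $1/p_n$ slots in expectation. Hence the expected cycle length is $\theta+1/p_n$ and the expected cycle reward is $\theta\omega-\eta E_n/p_n-(1-p_n)^{\tau_n-\theta}/p_n$; dividing and clearing $p_n$ yields exactly \eqref{rs1}. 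For $\sigma_n(\tau_n,1)$ the project is passive in every state, so the chain is absorbed at $\tau_n$ and the long-run reward is $\omega-1$. The randomized policy $\sigma_n(\theta,\rho)$ mixes, at state $\theta$, the actions of $\sigma_n(\theta,0)$ and $\sigma_n(\theta+1,0)$; its gain is linear-fractional and hence monotone in $\rho$ between those two endpoints.

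Third, I would read off the upper envelope of these gains as a function of $\omega$. Writing $g(\theta)$ for \eqref{rs1}, a direct computation gives $g(\theta)=g(\theta+1)$ precisely at $\omega=W_n(\theta)$, and $g(\tau_n)=\omega-1$ precisely at $\omega=W_n(\tau_n-1)$. A short ratio estimate, $\frac{(\theta+2)(1-p_n)^{\tau_n-\theta-2}}{(\theta+1)(1-p_n)^{\tau_n-\theta-1}}=\frac{\theta+2}{(\theta+1)(1-p_n)}>1$, shows $W_n(\theta)$ is strictly increasing in $\theta$. Consequently the maximizing threshold is nondecreasing in $\omega$: $\theta=0$ for $\omega\le W_n(0)$, threshold $\theta$ on $[W_n(\theta-1),W_n(\theta)]$, and the always-passive $\sigma_n(\tau_n,1)$ once $\omega\ge W_n(\tau_n-1)$. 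Where the maximizer is unique one gets the strict ``iff'' directions 1)--4); at the crossover subsidies $\omega=W_n(\theta)$ the two adjacent thresholds tie, and by the monotonicity of $g$ in $\rho$ the entire family $\{\sigma_n(\theta,\rho):\rho\in[0,1]\}$ shares this common gain, giving the indifference statements i)--ii).

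The main obstacle is the first step---rigorously justifying that only threshold policies need be compared; everything afterwards is the renewal-reward evaluation of \eqref{rs1} and the monotone-crossover bookkeeping, which are routine. An alternative that sidesteps the structural argument, and which simultaneously delivers the ``only if'' directions cleanly, is to verify the average-reward optimality equation directly: for $\omega$ in each stated range exhibit the gain $g$ from \eqref{rs1} together with a bias $h$ solving $h(y)+g=\max_{u\in\{0,1\}}\{r_\omega(y,u)+\sum_{y'}P(y'\mid y,u)h(y')\}$, and check that the threshold action attains the maximum, strictly off the crossovers and with a tie exactly at $\omega=W_n(\theta)$.
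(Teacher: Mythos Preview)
Your proposal is correct, and in fact the ``alternative'' you sketch at the end is precisely the route the paper takes: for each threshold policy $\sigma(\theta,0)$ the paper writes down the evaluation (Poisson) equations for the pair $(R,f)$ with $f(0)=0$, solves the resulting $\tau_n+1$ linear equations explicitly, and then checks directly that the Bellman optimality equation \eqref{eq:w-subsidy optimality equation} holds for this $(R,f)$ if and only if $\omega$ lies in the stated interval. Your main line of argument is genuinely different: you first argue structurally that the optimal stationary policy must be of threshold form (via a monotone-advantage/value-iteration induction), then evaluate each threshold's gain by renewal--reward regeneration at successes, and finally pick out the upper envelope by locating the crossovers $g(\theta)=g(\theta+1)$ at $\omega=W_n(\theta)$ and checking $W_n(\theta)$ is increasing. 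What the paper's route buys is that the threshold-structure step---which you rightly flag as the main obstacle---is never needed: verifying the optimality equation for a candidate policy automatically certifies it against \emph{all} stationary policies, not just threshold ones, and simultaneously delivers the ``iff'' directions. What your route buys is a more transparent derivation of the gain formula \eqref{rs1} (the renewal--reward computation is cleaner than solving $\tau_n+1$ equations) and a clearer picture of why the Whittle index arises as the crossover subsidy. Either argument is complete; they simply trade a structural lemma for a linear-algebra computation.
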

\begin{proof}
For the $\omega$-subsidy problem of project $n$, 
let us first analyze the $\sigma_n(\theta,0)$ policy. The subscript $n$ is suppressed in the following. 
For each $\theta \in \{0,1,\cdots,\tau\}$, $\sigma(\theta,0)$ is a deterministic stationary policy. 
That is, for each $\sigma(\theta,0)$,  there exists a function $g(\cdot)$ defined on the state space $ \{0,1,\cdots,\tau\}$ of the project, such that $U_n(t)=g(Y_n(t))$.  
Further, there exist a real number $R$ and a real function $f$ on the state space with $f(0)=0$ such that,
\begin{align*}
R+f(i)&=-\bm{1}\left\{i=\tau \right\}-  g(i)E \eta+\omega \left(1-g(i)\right)\\
&+
p g(i)f(0)
+(1-p)g(i)f\Big((i+1)\wedge \tau \Big) \\
&+
\Big(1-g(i)\Big) f\Big((i+1)\wedge \tau \Big), \forall i=0,1,\cdots,\tau.
\end{align*}
The value of $R$ and $f(i),i=1,\cdots,\tau$ can be obtained by solving the $\tau\!+\!1$ equations above, and it can be shown that the $R$ is the average expected system reward under this $\sigma(\theta,0)$ policy (see \cite{Puterman1994}). 
Then, by standard results in infinite-horizon dynamic programming, see \cite{Puterman1994}, policy $\sigma(\theta,0)$ is optimal if and only if the following optimality equation is satisfied, 
\begin{align}\label{eq:w-subsidy optimality equation}
\nonumber
R+f(i)&= \max_{u\in\{0,1\}}\Big\{
-\bm{1}\{i=\tau\}-u E\eta +\omega (1-u)\\ \nonumber
&+ p u f(0)+(1-p)u f\Big((i+1)\wedge \tau \Big) \\
&+(1\!-\!u)f\Big((i\!+\!1)\wedge \tau \Big)
\Big\},\forall i=0,\cdots,\tau.
\end{align}
Similar results hold for the policy $\sigma(\tau,1)$, under which the system is always passive. The conditions in 1)-4), and the average expected system reward under these policies are obtained.

To obtain the conditions i) and ii), note that $\sigma(\theta,0)=\sigma(\theta+1,1)$, and the policy $\sigma(\theta,\rho),\rho\in(0,1)$ can be regarded as a combination of $\sigma(\theta,0)$ and $\sigma(\theta,1)$. 
\end{proof}



\begin{theorem}\label{th:primal-dual theorem}
For the relaxed problem \eqref{eq:relaxed-OP object}-\eqref{eq:relaxed-OP constraint} and its dual $d(\omega)$, the following results hold:
\begin{enumerate}
\item The dual function $d(\omega)$ satisfies,
\begin{align*}
d(\omega)=\sum_{n=0}^{N-1} R_n(\omega)-\omega(1-\alpha)N.
\end{align*}
\item Strong duality holds, i.e., the optimal average reward for the relaxed problem, denoted $R_\text{rel}$, satisfies,  
\begin{align*}
R_\text{rel}=\min_{\omega\geq 0} d(\omega).
\end{align*}
\item  In addition, $d(\omega)$ is a convex and piecewise linear function of $\omega$. Thus, the value of $R_\text{rel}$ can be easily obtained.
\end{enumerate}

\end{theorem}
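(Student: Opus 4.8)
The plan is to establish the three claims in the order 1), 3), 2), because the decomposition in 1) drives the rest and the convex piecewise-linear structure of 3) is what makes the strong-duality construction in 2) tractable. For statement 1), the chain of inequalities in \eqref{eq:dual function upper bound} already yields $d(\omega)\le\sum_n R_n(\omega)-\omega(1-\alpha)N$, so only the reverse inequality remains. I would exhibit a single policy attaining the right-hand side: for each project $n$ let $\pi_n^\star$ be an optimal stationary threshold policy for the $\omega$-subsidy problem \eqref{eq:w-subsidy problem}, whose existence and average reward are supplied by Lemma \ref{th:w-subsidy optimal policy/value lemma}, and let $\pi^\star$ be the product policy that runs $\pi_n^\star$ on each project independently. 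Under $\pi^\star$ the projects evolve as independent finite-state Markov chains, each unichain and hence with a well-defined time-average reward, so the $\liminf$ defining $l(\pi^\star,\omega)$ is a genuine limit that decouples additively into $\sum_n R_n(\omega)-\omega(1-\alpha)N$. This lower bound matches the upper bound and proves 1).

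For statement 3), convexity is immediate: for each fixed $\pi$ the Lagrangian $l(\pi,\omega)$ is affine in $\omega$, since the multiplier enters only through the terms $\omega(1-U_n(t))$ and $-\omega(1-\alpha)N$, and $d(\omega)=\max_\pi l(\pi,\omega)$ is a pointwise supremum of affine functions, hence convex. Piecewise linearity then follows from 1): by Lemma \ref{th:w-subsidy optimal policy/value lemma} each $R_n(\omega)$ is the upper envelope of the finitely many affine-in-$\omega$ rewards in \eqref{rs1} associated with the policies $\sigma_n(\theta,0)$, $0\le\theta\le\tau_n$, together with the line $\omega-1$ of $\sigma_n(\tau_n,1)$; a maximum of finitely many affine functions is convex and piecewise linear, with breakpoints exactly at the values $W_n(\theta)$ of \eqref{eq:whittle index original in subsidy}. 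Summing over $n$ and subtracting the linear term $\omega(1-\alpha)N$ preserves this, so $d$ is convex and piecewise linear with breakpoints contained in $\{W_n(\theta)\}$; in particular $R_\text{rel}=\min_{\omega\ge0}d(\omega)$ is obtained by evaluating $d$ over this finite breakpoint set.

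For statement 2), weak duality $R_\text{rel}\le\min_{\omega\ge0}d(\omega)$ is standard: denoting by $R(\pi)$ the objective \eqref{eq:relaxed-OP object} of a policy $\pi$, for feasible $\pi$ and any $\omega\ge0$ the subsidy term is nonnegative, so $d(\omega)\ge l(\pi,\omega)\ge R(\pi)$, whence $d(\omega)\ge R_\text{rel}$ for every $\omega$. To close the gap I would take a minimizer $\omega^\star$ of the convex piecewise-linear $d$ (attained, since $d(\omega)\to+\infty$ as $\omega\to\infty$) and construct a policy that is feasible for \eqref{eq:relaxed-OP constraint} and attains $d(\omega^\star)$. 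At $\omega^\star$ each project admits a one-parameter family of optimal policies $\sigma_n(\theta_n,\rho)$, $\rho\in[0,1]$ (cases i)--ii) of Lemma \ref{th:w-subsidy optimal policy/value lemma}), and as $\rho$ varies the stationary fraction of slots in which project $n$ is passive moves continuously over an interval. The idea is an intermediate-value argument: tune these randomization parameters across projects so that the time-average number of passive projects equals exactly $(1-\alpha)N$. The resulting product policy is feasible, remains a Lagrangian maximizer at $\omega^\star$ (being built from per-project optimizers), and by complementary slackness its subsidy term vanishes, so its primal objective equals $d(\omega^\star)=\min_\omega d(\omega)$; hence $R_\text{rel}\ge\min_\omega d(\omega)$ and strong duality follows.

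I expect the main obstacle to be exactly this construction in 2): one must verify that, as the thresholds and randomization probabilities range over all projects' optimal families at $\omega^\star$, the attainable values of the average passive-fraction actually cover the target $(1-\alpha)N$. This requires showing that $\omega^\star$ is a point where the relaxed constraint is active, and that the continuity of the stationary passive-fraction in $\rho$, together with the change of optimal threshold across the kink $\omega^\star$, produces an interval of attainable fractions straddling $(1-\alpha)N$. The remaining work—solving the finite linear systems underlying Lemma \ref{th:w-subsidy optimal policy/value lemma} and confirming the reward formula \eqref{rs1}—is routine.
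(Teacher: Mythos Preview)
Your proposal is correct and follows essentially the same route as the paper: part 1) by showing the inequalities in \eqref{eq:dual function upper bound} collapse to equalities because the per-project optima are achieved by stationary policies with genuine limits, part 3) by the supremum-of-affine argument together with the finite list of threshold rewards in \eqref{rs1}, and part 2) via complementary slackness. The paper in fact omits the construction you sketch for 2), so your intermediate-value argument with the randomized thresholds $\sigma_n(\theta_n,\rho)$ supplies exactly the detail the paper leaves out; just be careful that only those projects whose breakpoints $W_n(\theta)$ coincide with $\omega^\star$ admit randomization, while the remaining projects have a unique optimal threshold at $\omega^\star$.
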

\begin{proof}
For 1), it follows from Lemma \ref{th:w-subsidy optimal policy/value lemma} that for the $\omega$-subsidy problem associated with each project $n$, there is at least one stationary optimal policy, and under this policy, the optimality equation holds true. 
Thus, under the optimal policy, the limit of the time average reward exists (which is closely related to the optimality equation, see \cite{Puterman1994}). 
That is, the $\limsup_{T\rightarrow+\infty}$ in \eqref{eq:w-subsidy problem} can be replaced by $\lim_{T\rightarrow+\infty}$. 
As a result, all the ``less than or equal to" in \eqref{eq:dual function upper bound} can be replaced by equality signs. This proves the first statement.

For 2), the strong duality is proved by showing complementary slackness. The details are omitted due to space constraints.

For 3), it follows from equation~\eqref{rs1} that each $R_n(\omega)$ is a piecewise linear function. To prove convexity of $R_n(\omega)$, note that the reward earned by any policy is a linear function of $\omega$, and the supremum of linear functions is convex.
Thus, by statement 1), $d(\omega)$ is also convex and piecewise linear. In addition, since each $R_n(\omega)$ can be easily derived from Lemma \ref{th:w-subsidy optimal policy/value lemma}, the expression of $d(\omega)$ easily follows. Thus, $R_\text{rel}$, which is the minimum value of this known, convex, and piecewise linear function $d(\omega)$, can be easily obtained.
\end{proof}

\section{The Large Client Population Asymptotic Optimality of the Index Policy}\label{se:Whittle index Section}

The \textit{Whittle index} (see \cite{Whittle1988}) $W_n(i)$ of project $n$ at state $i$ is defined as the value of the subsidy that makes the passive and active actions equally attractive for the $\omega$-subsidy problem associated with project $n$ in state $i$. 
The $n$-th project is said to be \textit{indexable} if the following is true: Let $B_n(\omega)$ be the set of states for which project $n$ would be passive under an optimal policy for the corresponding $\omega$-subsidy problem. Project $n$ is \textit{indexable} if, as $\omega$ increases from $-\infty$ to $+\infty$, the set $B_n(\omega)$ increases monotonically from $\emptyset$ to the whole state space of project $n$. The bandit problem is indexable if each of the constituent projects is indexable. 

\begin{lemma}\label{th:whittle indexability Lemma}
The following are true:
\begin{enumerate}
\item The Whittle index $W_n(i)$ of project $n$ at state $i$ is, 
\begin{align*}
W_n(i)=
p_n(i+1)(1-p_n)^{\tau_n-(i+1)}-\eta E_n, 
\end{align*}
when $i=0,1,\cdots,\tau_n-1$; while $W_n(\tau_n)=W_n(\tau_n\!-\!1)$.
\item The stringent-constraint scheduling problem  \eqref{eq:original-OP object}-\eqref{eq:original-OP constraint} is indexable.
\item For each project $n$, the transition rates of its states in the associated $\omega$-subsidy problem form a unichain (there is a state $j\in\{0,1,\cdots,\tau_n\}$ such that there is a path from any state $i\in \{0,1,\cdots,\tau_n \}$ to state $j$), regardless of the policy employed. 
\end{enumerate}
\end{lemma}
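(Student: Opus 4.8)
The plan is to obtain all three statements from the optimal-policy characterization in Lemma~\ref{th:w-subsidy optimal policy/value lemma}, supplemented by a single monotonicity computation and a reachability argument. For statement 1), I would invoke the definition of the Whittle index as the subsidy level at which the passive and active actions are \emph{both} optimal in a given state. By condition i) of Lemma~\ref{th:w-subsidy optimal policy/value lemma}, for $i\in\{0,\ldots,\tau_n-1\}$ the whole family $\{\sigma_n(i,\rho):\rho\in[0,1]\}$ is optimal precisely when $\omega=W_n(i)$; since this family interpolates between being active in state $i$ (at $\rho=0$) and being passive there (at $\rho=1$), both actions are optimal in state $i$ at exactly this subsidy, so the index equals $W_n(i)$ as given in \eqref{eq:whittle index original in subsidy}. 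For the boundary state $i=\tau_n$, condition ii) says $\{\sigma_n(\tau_n,\rho)\}$ is optimal when $\omega=W_n(\tau_n-1)$, and the same interpolation yields $W_n(\tau_n)=W_n(\tau_n-1)$.

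For statement 2), by the stated definition of indexability it suffices to show each project is indexable, i.e., that the passive set $B_n(\omega)$ grows monotonically from $\emptyset$ to the full state space as $\omega$ increases. The policy $\sigma_n(\theta,0)$ is passive exactly on $\{0,\ldots,\theta-1\}$, and Lemma~\ref{th:w-subsidy optimal policy/value lemma} records which $\sigma_n(\theta,0)$ is optimal on each subsidy interval. The one fact requiring verification is that the thresholds are correctly ordered, namely $W_n(0)<W_n(1)<\cdots<W_n(\tau_n-1)$. Writing $q_n:=1-p_n$ and dropping the common constant $-\eta E_n$, this reduces to showing that $(\theta+1)q_n^{\tau_n-\theta-1}$ is strictly increasing in $\theta$; the ratio of consecutive terms is $\tfrac{\theta+2}{(\theta+1)q_n}$, which exceeds $1$ since $q_n<1$, giving strict monotonicity. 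With this ordering the optimality intervals partition the $\omega$-axis so that $B_n(\omega)=\emptyset$ for $\omega<W_n(0)$, $B_n(\omega)=\{0,\ldots,\theta-1\}$ for $W_n(\theta-1)<\omega<W_n(\theta)$, and $B_n(\omega)$ equals the whole state space for $\omega>W_n(\tau_n-1)$, which is precisely monotone growth.

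For statement 3), I would exhibit $j=\tau_n$ as a state reachable from every state under any policy. Inspecting the transitions in \eqref{eq:indivitual transition pro. MDP-2}, from any state $i$ the successor $(i+1)\wedge\tau_n$ is reached with probability $1$ under the passive action and with probability $1-p_n>0$ under the active action; hence, whatever (possibly randomized) action the policy prescribes, there is strictly positive probability of moving from $i$ to $(i+1)\wedge\tau_n$. Chaining these increments produces a positive-probability path $i\to i+1\to\cdots\to\tau_n$ for every $i$, so $\tau_n$ is universally reachable and lies in the unique recurrent class, making the chain a unichain regardless of policy.

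The genuine content is concentrated in statement 1) and the ordering step of statement 2): the delicate point is matching the Whittle-index indifference condition to the randomized-policy optimality conditions i) and ii), after which the strict inequalities $W_n(0)<\cdots<W_n(\tau_n-1)$ are the only computation needed, and statement 3) follows immediately from the structure of the transition kernel.
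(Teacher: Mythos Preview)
Your proposal is correct and follows essentially the same approach as the paper: statements 1) and 2) are read off from Lemma~\ref{th:w-subsidy optimal policy/value lemma} and the definitions, and statement 3) uses the same reachability-to-$\tau_n$ argument via $1-p_n>0$. The only difference is that you make explicit the strict ordering $W_n(0)<\cdots<W_n(\tau_n-1)$, which the paper leaves implicit (it is already encoded in the nonempty optimality intervals of Lemma~\ref{th:w-subsidy optimal policy/value lemma}); your ratio computation is a clean way to verify this directly.
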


\begin{proof}
Statements $1$) and $2$) directly follow from Lemma \ref{th:w-subsidy optimal policy/value lemma} and the definition of Whittle index, indexability. 
To prove statement $3$), note that since $p_n<1$, there is a positive probability that there is no packet delivery for $\tau_n$ successive time slots, regardless of the policy employed. Thus, from any state $i\in\{0,1,\cdots,\tau_n \}$, there is a path to the state $\tau_n$.
\end{proof}

As a result, 
the Whittle indices induce a well-defined order on the state values of each project. This gives the following heuristic policy.

\textit{Whittle Index Policy}: At the beginning of each time slot $t$, client $n$ is scheduled if its Whittle index $W_n\left(Y_n\left(t\right)\right)$ is positive, and, moreover, is within the top $\alpha N$ index values of all clients in that slot. Ties are broken arbitrarily, with no more than $\alpha N$ clients simultaneously scheduled. 

Now, we show the asymptotic optimality property of the Whittle Index Policy.  Classify the $N$ projects into $K$ \textit{classes} such that the projects in the same class have the same values of $p_n$, $\tau_n$ and $E_n$, while projects not in the same class differ in at least one of these parameters. For each class $k\in\{1,\cdots,K\}$, denote by $\gamma_k$ the \textit{proportion} of total projects that it contains; that is, there are $\gamma_k N$ projects in class $k$.

\begin{assumption}\label{global attractor assumption}
Construct the \textit{fluid model} of the restless bandit problem \eqref{eq:original-OP object}-\eqref{eq:original-OP constraint} as in \cite{Weber1990} and  \cite{Verloop2014}, and denote the fluid process as $\mathbf{z}(t)$. We assume that, under the Whittle Index Policy, $\mathbf{z}(t)$ satisfies the global attractor property. That is, there exists $\mathbf{z}^\star$ such that from any initial point $\mathbf{z}(0)$, the process $\mathbf{z}(t)$ converges to the point $\mathbf{z}^\star$, under the Whittle Index Policy. 
\end{assumption}

This assumption is not restrictive because of the following:  
First note that the MDP-2 itself also satisfies the unichain property. 
Then, under the Whittle Index policy, MDP-2 also forms a unichain. As a result, this $N$ client bandit problem has a single recurrent class, and has a global attractor. Thus, it is not restrictive to assume that its fluid model also  satisfies the global attractor property. 

\begin{theorem}
When Assumption \ref{global attractor assumption} holds, as the number $N$ of clients increases to infinity, $R_\text{ind}/N \rightarrow R_\text{rel}/N $, where $R_\text{ind}$ and $R_\text{rel}$ is the system reward under the Whittle Index policy and the optimal relaxed policy, respectively. 
(Here, the fraction of active bandit $\alpha$ and the proportion of each bandit class $\gamma_k$ remain the same when $N$ increases. In addition, 
the client number $N$ is such that all  $\gamma_k N$ are  integers.) Thus, the Whittle Index policy is asymptotically optimal. 
\end{theorem}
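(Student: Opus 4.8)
The plan is to adapt the classical mean-field argument of Weber and Weiss \cite{Weber1990} for restless bandits, decomposing the statement into a routine upper bound and the substantive matching lower bound. First I would dispense with the upper bound, which holds for every finite $N$. The Whittle Index policy never activates more than $\alpha N$ projects, so it is feasible for the stringent problem \eqref{eq:original-OP object}-\eqref{eq:original-OP constraint}, whose optimum is in turn no larger than $R_\text{rel}$ because the relaxation merely replaces the per-slot constraint \eqref{eq:original-OP constraint} by its time-average version \eqref{eq:relaxed-OP constraint}. Hence $R_\text{ind}\le R_\text{rel}$ for all $N$. Since the relaxed problem decouples across projects (Theorem \ref{th:primal-dual theorem}) and the class proportions $\gamma_k$ together with the activation fraction $\alpha$ are held fixed as $N$ grows, strong duality gives $d(\omega)=N\big[\sum_k\gamma_k R^{(k)}(\omega)-\omega(1-\alpha)\big]$, so $R_\text{rel}/N$ does not depend on $N$; denote this per-capita value by $r_\text{rel}$. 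Consequently $\limsup_{N\to\infty}R_\text{ind}/N\le r_\text{rel}$, and the whole content of the theorem is the reverse inequality $\liminf_{N\to\infty}R_\text{ind}/N\ge r_\text{rel}$.

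To establish the lower bound I would pass to the fluid description. For each class $k$ and state $i$, let $z_{k,i}^N(t)$ be the fraction, out of all $N$ projects, of class-$k$ projects currently in state $i$, and collect these into the occupancy vector $\mathbf{z}^N(t)$. Because the Whittle Index policy selects the active set purely from the ranking of the indices $W_n(Y_n(t))$, the active fraction in each cell $(k,i)$ is a deterministic function of $\mathbf{z}^N(t)$ alone; and, given the chosen action, the one-step transitions of the individual projects are conditionally independent with the probabilities \eqref{eq:indivitual transition pro. MDP-2}. A law-of-large-numbers / martingale argument (propagation of chaos) then shows that, on any finite horizon, $\mathbf{z}^N(\cdot)$ converges in probability as $N\to\infty$ to the deterministic fluid trajectory $\mathbf{z}(\cdot)$ of Assumption \ref{global attractor assumption}, and that the normalized reward accrued by the Whittle policy converges to the reward functional evaluated along $\mathbf{z}(\cdot)$.

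Next I would invoke Assumption \ref{global attractor assumption}: from any start the fluid trajectory is attracted to the fixed point $\mathbf{z}^\star$. The key structural identification is that $\mathbf{z}^\star$ coincides with the optimal occupancy measure of the relaxed problem. At the fluid equilibrium the index policy activates exactly the fraction $\alpha$ of projects, so the binding index threshold equals the optimal Lagrange multiplier $\omega^\star$ that minimizes $d(\omega)$ in Theorem \ref{th:primal-dual theorem}; projects whose index exceeds $\omega^\star$ are active and those below are passive, which is precisely the threshold structure of the optimal $\omega^\star$-subsidy policies characterized in Lemma \ref{th:w-subsidy optimal policy/value lemma}. Hence the per-project stationary reward collected at $\mathbf{z}^\star$ equals the decoupled relaxed optimum $r_\text{rel}$, and by the unichain property of Lemma \ref{th:whittle indexability Lemma} this stationary reward is the long-run average reward of the fluid system.

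Finally I would combine these pieces by interchanging the population and time limits: $\lim_{N\to\infty}\lim_{T\to\infty}$ of the normalized Whittle reward should equal $\lim_{T\to\infty}\lim_{N\to\infty}$, which by the fluid convergence equals the long-run reward along $\mathbf{z}(\cdot)$, and by the global attractor equals the reward at $\mathbf{z}^\star$, namely $r_\text{rel}$. Justifying this interchange is the crux of the argument and the step I expect to be the main obstacle: the fluid convergence is uniform only on compact time intervals, so one must use the global attractor together with the uniform boundedness of the per-project rewards, each of which lies in $[-(1+\eta E_n),0]$, and the unichain mixing of Lemma \ref{th:whittle indexability Lemma} to control the tails and rule out a discrepancy that is $o(N)$ in the empirical measure yet $\Theta(N)$ in reward. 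This yields $\liminf_{N\to\infty}R_\text{ind}/N\ge r_\text{rel}$, which with the upper bound completes the proof.
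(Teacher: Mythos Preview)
Your proposal is correct and in fact reproduces, in outline, the very mean-field argument of \cite{Weber1990,Verloop2014} that the paper simply \emph{cites}. The paper's own proof is two sentences: it checks that the hypotheses of the asymptotic-optimality theorem in \cite{Verloop2014} are met---indexability and the explicit indices from Lemma~\ref{th:whittle indexability Lemma}, the unichain property also from Lemma~\ref{th:whittle indexability Lemma}, and the global attractor from Assumption~\ref{global attractor assumption}---and then invokes that theorem to obtain $R_\text{ind}/N\to R_\text{rel}/N$, with the upper bound $R_\text{rel}\ge R_\text{ori}$ giving asymptotic optimality. You, by contrast, unpack the black box: the finite-$N$ upper bound, the scale-invariance of $R_\text{rel}/N$, the finite-horizon propagation-of-chaos limit for the occupancy vector, the identification of the fluid fixed point with the relaxed optimum, and the interchange of the $N\to\infty$ and $T\to\infty$ limits. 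What you gain is a self-contained argument; what the paper gains is brevity and robustness, since any subtleties are absorbed by the cited reference.

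One place where your sketch is slightly glib is the identification step: the assertion that ``at the fluid equilibrium the index policy activates exactly the fraction $\alpha$ of projects, so the binding index threshold equals the optimal Lagrange multiplier $\omega^\star$'' need not hold literally. The Whittle policy here only activates projects with \emph{positive} index, so the active fraction at equilibrium can be strictly below $\alpha$; and even when the $\alpha$-budget binds, ties at the threshold may require randomization. Both cases are exactly where complementary slackness (part 2 of Theorem~\ref{th:primal-dual theorem}) and the randomized threshold policies $\sigma_n(\theta,\rho)$ of Lemma~\ref{th:w-subsidy optimal policy/value lemma} enter, and they are what the full proof in \cite{Verloop2014} handles. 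This is not a gap in your strategy, but it is where the remaining work would concentrate if you were to write the argument out in full rather than cite it.
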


\begin{proof}
By Assumption \ref{global attractor assumption} and Lemma \ref{th:whittle indexability Lemma}, $R_\text{ind}/N \rightarrow R_\text{rel}/N $ directly from the result in \cite{Verloop2014}. Note that $R_\text{rel}$ is an upper-bound for the stringent-constraint problem; thus, the asymptotic optimality holds.
\end{proof}

\section{Simulation Results}

We now present the results of simulations of Whittle Index policy with respect to its average cost per client. The numerical results of the relaxed-constraint problem \eqref{eq:relaxed-OP object}-\eqref{eq:relaxed-OP constraint}, which is derived by Theorem \ref{th:primal-dual theorem} and Lemma \ref{th:w-subsidy optimal policy/value lemma}, are also employed to provide a bound on the stringent-constraint problem.  

Fig. \ref{fg:asymptical optimal} illustrates the average cost per client under the relaxed optimal policy and the Whittle Index policy for different total numbers of clients. 
It can be seen that when the total number of clients increases, the gap between the relaxed optimal cost and the cost under the Whittle Index policy shrinks to zero.
Since the optimal cost of the relaxed-constraint problem serves as a lower bound on the cost in the stringent-constraint problem,  
this means the Whittle Index policy approaches the optimal cost as the total number of clients increases, i.e., the Whittle Index policy is asymptotically optimal.  

\begin{figure}[!t]
	\centering
	\includegraphics[width=0.5\textwidth]{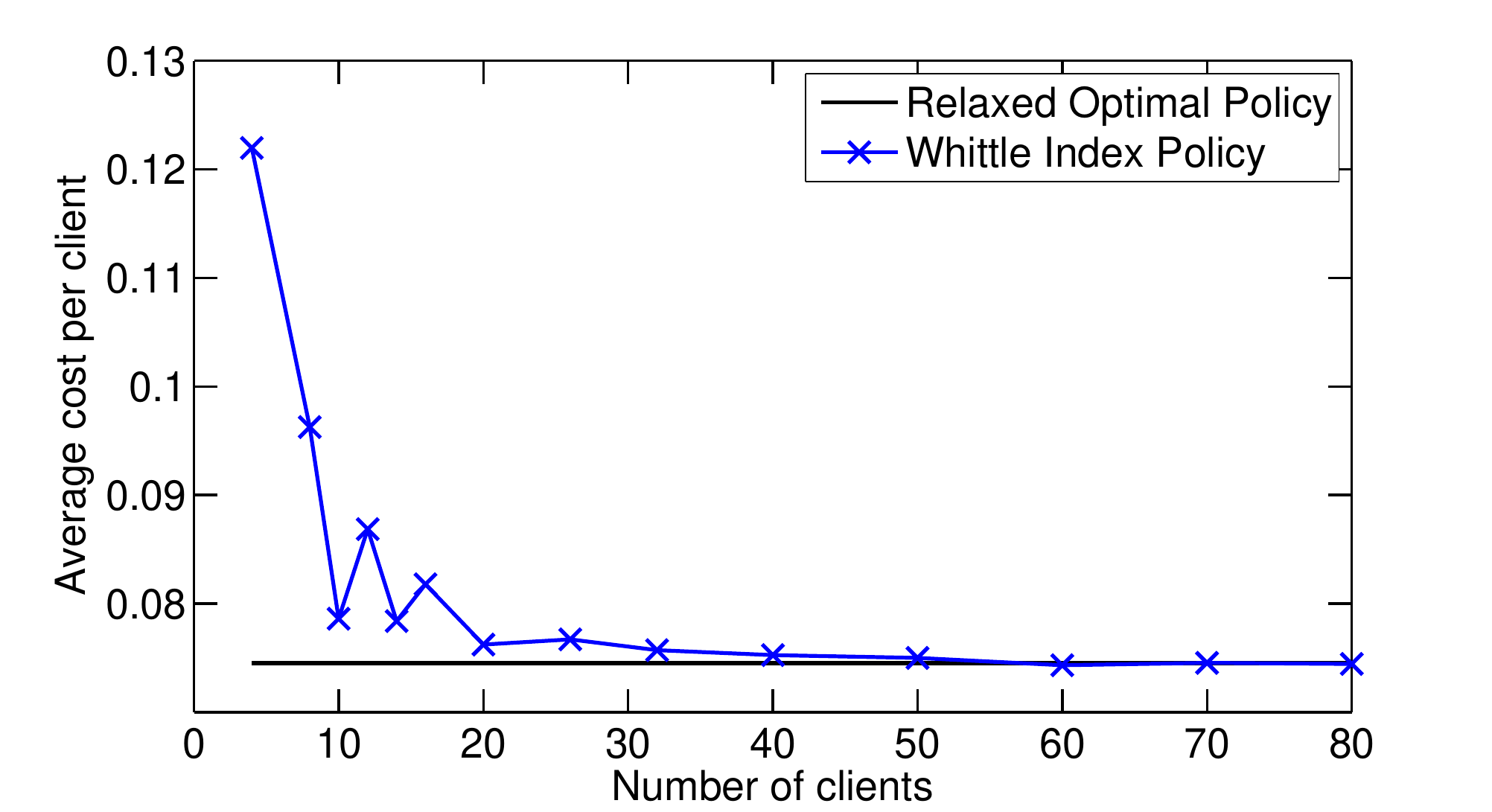}
	\caption{The time average cost per client vs. the total number of clients for the optimal policy under the relaxed constraint and the Whittle Index policy are shown. (The parameters are $\alpha=0.3$, $\eta=0.1$, with $K=2$ classes of projects, and  $\gamma_1=0.5$, $\gamma_2=0.5$ proportion of projects in each class. For each client $n$ in the first class, $p_n=0.6$, $\tau_n=10$, $E_n=2$; while for each client $n$ in the second class, $p_n=0.8$, $\tau_n=5$, $E_n=3$.)}
	\label{fg:asymptical optimal}
\end{figure}

Fig. \ref{fg:eta effect} illustrates the average inter-delivery penalty per client versus the average energy consumption per client under the Whittle Index policy for different values of the energy-efficiency parameter $\eta$. As $\eta$ increases, the average energy consumption decreases, while the average inter-delivery penalty increases. Thus, there is a tradeoff between energy-efficiency and inter-delivery regularity. By changing $\eta$, we can balance these two important considerations.

\begin{figure}[!t]
	\centering
	\includegraphics[width=0.5\textwidth]{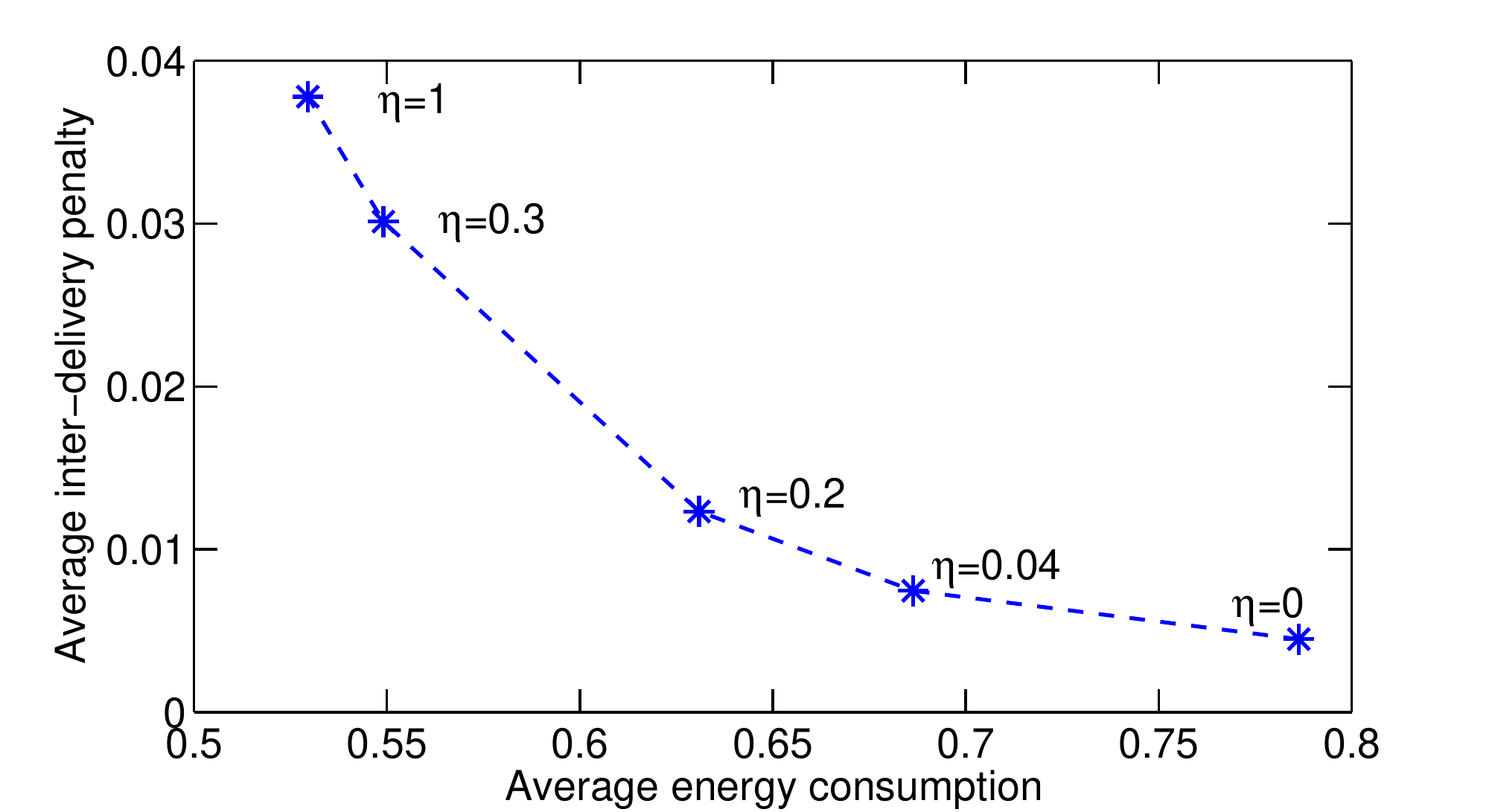}
	\caption{The time average inter-delivery penalty per client vs. the time average energy consumption per client under the Whittle Index Policy for different values of energy-efficiency parameter $\eta$ are shown. (The parameters are $N=100$, $\alpha=0.3$, with $K=2$ classes of projects, and $\gamma_1=0.5,~\gamma_2=0.5$ proportion of projects in each class. For each client $n$ in the first class, $p_n=0.6$, $\tau_n=10$, $E_n=2$; while for each client $n$ in the second class, $p_n=0.8$, $\tau_n=5$, $E_n=3$.)}
	\label{fg:eta effect}
\end{figure}

\section*{Acknowledgment}
This work is sponsored
in part by the National Basic Research Program of China
(2012CB316001), the Nature Science Foundation of China
(61201191, 60925002, 61021001), NSF under Contract Nos.
CNS-1302182 and CCF-0939370, AFOSR under Contract No.
FA-9550-13-1-0008, and Hitachi Ltd.

\bibliographystyle{IEEEtran}
\bibliography{SIP}

\begin{thebibliography}{10}
\providecommand{\url}[1]{#1}
\csname url@samestyle\endcsname
\providecommand{\newblock}{\relax}
\providecommand{\bibinfo}[2]{#2}
\providecommand{\BIBentrySTDinterwordspacing}{\spaceskip=0pt\relax}
\providecommand{\BIBentryALTinterwordstretchfactor}{4}
\providecommand{\BIBentryALTinterwordspacing}{\spaceskip=\fontdimen2\font plus
\BIBentryALTinterwordstretchfactor\fontdimen3\font minus
  \fontdimen4\font\relax}
\providecommand{\BIBforeignlanguage}[2]{{%
\expandafter\ifx\csname l@#1\endcsname\relax
\typeout{** WARNING: IEEEtran.bst: No hyphenation pattern has been}%
\typeout{** loaded for the language `#1'. Using the pattern for}%
\typeout{** the default language instead.}%
\else
\language=\csname l@#1\endcsname
\fi
#2}}
\providecommand{\BIBdecl}{\relax}
\BIBdecl

\bibitem{atilla1}
R.~Li, A.~Eryilmaz, and B.~Li, ``Throughput-optimal wireless scheduling with
  regulated inter-service times,'' in \emph{INFOCOM, 2013 Proceedings IEEE},
  April 2013, pp. 2616--2624.

\bibitem{atilla2}
B.~Li, R.~Li, and A.~Eryilmaz, ``Heavy-traffic-optimal scheduling with regular
  service guarantees in wireless networks,'' in \emph{Proceedings of the
  Fourteenth ACM International Symposium on Mobile Ad Hoc Networking and
  Computing}, ser. MobiHoc '13.\hskip 1em plus 0.5em minus 0.4em\relax ACM,
  2013, pp. 79--88.

\bibitem{Singh2013}
R.~Singh, I.-H. Hou, and P.~Kumar, ``Pathwise performance of debt based
  policies for wireless networks with hard delay constraints,'' in
  \emph{Decision and Control (CDC), 2013 IEEE 52nd Annual Conference on}, Dec
  2013, pp. 7838--7843.

\bibitem{Singh2014}
------, ``Fluctuation analysis of debt based policies for wireless networks
  with hard delay constraints,'' in \emph{INFOCOM, 2014 Proceedings IEEE},
  April 2014, pp. 2400--2408.

\bibitem{Singh2015Sig}
R.~Singh and A.~Stolyar, in \emph{Sigmetrics, title={MaxWeight Scheduling:
  Asymptotic Behavior of Unscaled Queue-Differentials in Heavy Traffic},
  year={2015},}.

\bibitem{Singh2015Infocom}
R.~Singh, X.~Guo, and P.~R. Kumar, ``Index policies for optimal mean-variance
  trade-off of inter-delivery times in real-time sensor networks,'' in
  \emph{INFOCOM}, 2015.

\bibitem{Puterman1994}
M.~L. Puterman, \emph{{M}arkov Decision Processes: Discrete Stochastic Dynamic
  Programming}, 1st~ed.\hskip 1em plus 0.5em minus 0.4em\relax New York, NY,
  USA: John Wiley \& Sons, Inc., 1994.

\bibitem{Whittle1988}
P.~Whittle, ``Restless bandits: Activity allocation in a changing world,''
  \emph{Journal of Applied Probability}, vol.~25, pp. pp. 287--298, 1988.

\bibitem{Weber1990}
R.~R. Weber and G.~Weiss, ``On an index policy for restless bandits,''
  \emph{Journal of Applied Probability}, vol.~27, no.~3, pp. pp. 637--648,
  1990.

\bibitem{Verloop2014}
\BIBentryALTinterwordspacing
I.~M. Verloop, ``Asymptotic optimal control of multi-class restless bandits,''
  2014. [Online]. Available:
  \url{http://hal-univ-tlse2.archives-ouvertes.fr/docs/01/01/91/94/PDF/RBP.pdf}
\BIBentrySTDinterwordspacing

\end{thebibliography}
\end{document}